\newenvironment{mitem}
{\begin{itemize}
  \setlength{\itemsep}{1pt}
  \setlength{\parskip}{0pt}
  \setlength{\parsep}{0pt}}
{\end{itemize}}
\newtheorem{thm}{Theorem}
\newtheorem*{thm*}{Theorem}
\newtheorem*{prop*}{Proposition}
\newtheorem{lem}[thm]{Lemma}
\theoremstyle{definition}
\newcommand{\ZX}{\textsc{zx}}
\newcommand{\ZZ}{\mathbb{Z}}
\newcommand{\ket}[1]{\left| #1 \right>} 
\tikzstyle{none}=[inner sep=0pt]
\tikzstyle{rn}=[circle,fill=Red,draw=Black,line width=0.8 pt,minimum size=5pt,inner sep=0pt]
\tikzstyle{gn}=[circle,fill=Lime,draw=Black,line width=0.8 pt,minimum size=5pt,inner sep=0pt]
\tikzstyle{Hadamard}=[rectangle,fill=Yellow,draw=Black,minimum size=8pt,inner sep=0pt,label={center:$\scriptstyle\mathrm{H}$}]
\tikzstyle{gphase}=[rounded rectangle,rounded rectangle arc length=90,fill=Lime!20,inner sep=2pt]
\tikzstyle{rphase}=[rounded rectangle,rounded rectangle arc length=90,fill=Red!20,inner sep=2pt]
\tikzstyle{normalrect}=[rectangle,fill=white,draw=black,minimum height=12pt,minimum width=14pt,inner sep=0pt]
\tikzstyle{bn}=[circle,fill=black,draw=black,inner sep=1pt]
\tikzstyle{bigcloud}=[cloud,fill=white,draw=black]
\newcommand{\phase}[1]{\begin{tikzpicture}[baseline=-.1cm]
	\begin{pgfonlayer}{nodelayer}
		\node [style=none] (0) at (0, 0.2) {};
		\node [style={#1}] (1) at (0, -0) {};
		\node [style=none] (2) at (0, -0.2) {};
	\end{pgfonlayer}
	\begin{pgfonlayer}{edgelayer}
		\draw [] (0.center) to (1);
		\draw [] (1) to (2.center);
	\end{pgfonlayer}
\end{tikzpicture}}
\newcommand{\Hadamard}[0]{\begin{tikzpicture}[baseline=-0.1cm]
	\begin{pgfonlayer}{nodelayer}
		\node [style=Hadamard] (0) at (0, 0) {};
		\node [style=none] (1) at (0, 0.25) {};
		\node [style=none] (2) at (0, -0.25) {};
	\end{pgfonlayer}
	\begin{pgfonlayer}{edgelayer}
		\draw (0.center) to (1);
		\draw (2.center) to (1);
	\end{pgfonlayer}
\end{tikzpicture}}
\title{The \ZX-calculus is complete for the single-qubit Clifford+T group}
\author{Miriam Backens
\institute{Department of Computer Science, University of Oxford \\ Wolfson Building, Parks Road, Oxford, OX1 3QD, UK}
\email{miriam.backens@cs.ox.ac.uk}
}
\begin{document}

\maketitle

\begin{abstract}
 The \ZX-calculus is a graphical calculus for reasoning about pure state qubit quantum mechanics.
 It is complete for pure qubit stabilizer quantum mechanics, meaning any equality involving only stabilizer operations that can be derived using matrices can also be derived pictorially.
 Stabilizer operations include the unitary Clifford group, as well as preparation of qubits in the state $\ket{0}$, and measurements in the computational basis. 
 For general pure state qubit quantum mechanics, the \ZX-calculus is incomplete: there exist equalities involving non-stabilizer unitary operations on single qubits which cannot be derived from the current rule set for the \ZX-calculus.
 Here, we show that the \ZX-calculus for single qubits remains complete upon adding the operator $T=\left(\begin{smallmatrix}1&0\\0&e^{i\pi/4}\end{smallmatrix}\right)$ to the single-qubit stabilizer operations.
 This is particularly interesting as the resulting single-qubit Clifford+T group is approximately universal, i.e.\ any unitary single-qubit operator can be approximated to arbitrary accuracy using only Clifford operators and $T$.
\end{abstract}

\section{Introduction}\label{s:introduction}

The \ZX-calculus introduced by Coecke and Duncan \cite{coecke_interacting_2008} is a powerful graphical calculus for pure state qubit quantum mechanics.
It combines the advantages of using both dimensions of a sheet of paper, as in quantum circuit notation, with a built-in system of rewrite rules.
These allow computations to be done graphically without the need to re-state or re-derive circuit identities each time.
Like quantum circuit notation, the \ZX-calculus can be used to express any operation in pure state qubit quantum mechanics, i.e.\ it is \emph{universal}.
Furthermore the rewrite rules can easily be shown to hold true when translated into matrix mechanics; therefore any equality derived in the \ZX-calculus can also be derived in matrix mechanics.
This property is called \emph{soundness}.

A more intricate question is that of \emph{completeness}: Can any equality that is true in matrix mechanics also be derived graphically using the given rule set?
As recently shown, the answer is no: there are equalities in pure state qubit quantum mechanics, even when restricted to single-qubit operators, that cannot be derived graphically using the current set of rewrite rules \cite{zamdzhiev_communication_2013}.
Yet when the set of allowed operations is restricted to stabilizer quantum mechanics, the answer is yes, even for multi-qubit states and operations \cite{backens_zx-calculus_2013}.

Qubit stabilizer quantum mechanics is the fragment of quantum mechanics characterised by the fact that all allowed states are eigenstates of tensor products of the Pauli matrices with global phases.
The stabilizer unitaries are those unitary operations that map tensor products of Pauli matrices to tensor products of Pauli matrices, again up to global phase.
They form a group, called the Clifford group, which is generated by the Hadamard operator $H$, the phase operator $S$, and the controlled-NOT operator $\wedge X$:
\begin{equation}\label{eq:Clifford_generators}
 H = \frac{1}{\sqrt{2}}\begin{pmatrix}1&1\\1&-1\end{pmatrix},\quad S = \begin{pmatrix}1&0\\0&i\end{pmatrix},\quad \wedge X = \begin{pmatrix}1&0&0&0\\0&1&0&0\\0&0&0&1\\0&0&1&0\end{pmatrix}.
\end{equation}
While stabilizer quantum mechanics is an important and active area of research, the number of distinct stabilizer operations on a finite number of qubits are finite.
Hence stabilizer quantum mechanics encompasses only a small fragment of all pure quantum operations on qubits.
Yet, the addition of almost any non-stabilizer operation to the Clifford group allows any unitary operation to be approximated to arbitrary accuracy with the given set of operations.

There is thus a large gap between the \ZX-calculus completeness result for stabilizer operations and the incompleteness result for general pure state qubit quantum mechanics.
Here, we make a step towards closing that gap by showing that the \ZX-calculus is complete for the single-qubit Clifford+T group, where we have added the operation
\begin{equation}\label{eq:TH}
 T = \begin{pmatrix}1&0\\0&e^{i\pi/4}\end{pmatrix}
\end{equation}
to the set in \eqref{eq:Clifford_generators}, but removed the two-qubit controlled-NOT operator.
As $S=T^2$, the generating set for the single-qubit Clifford+T group is $\{H,T\}$.
These two operators suffice to approximate any single-qubit unitary to arbitrary precision \cite{boykin_universal_1999}.

The core of the completeness proof is a normal form theorem for single-qubit Clifford+T operators, based on a result for quantum circuits by Matsumoto and Amano \cite{matsumoto_representation_2008}.
We show how an arbitrary single-qubit Clifford+T operator in the \ZX-calculus can be brought into normal form and also prove that the normal form is unique.
This implies that any equality between single-qubit Clifford+T operators that can be derived in matrix mechanics can also be derived in the \ZX-calculus: as all rewrite rules are invertible, bringing two diagrams into the same normal form directly yields a series of rewrite steps transforming one diagram into the other.

In section \ref{s:preliminaries}, we present the elements and rules of the \ZX-calculus for single-qubit Clifford+T operators, as well as a number of definitions and lemmas used throughout the rest of the paper. Section \ref{s:completeness} contains the completeness proof. Some conclusions and ideas for further work are given in section \ref{s:conclusions}.
\section{Preliminaries}
\label{s:preliminaries}

\subsection{Elements and rules of the \ZX-calculus}\label{s:ZX}

In this paper, we are only considering a small fragment of the \ZX-calculus and only introduce the components and rules needed to show the result.
For a treatment of the full \ZX-calculus see \cite{coecke_interacting_2011}.

Diagrams of the \ZX-calculus consist of nodes and wires between them.
Wires may also end at ``empty nodes'', representing inputs and outputs of the diagram.
Here, we are interested only in line graphs, i.e.\ diagrams in which there are exactly two wires connected to any non-empty node.
The three types of nodes are interpreted as follows:
\[
 \phase{gn,label={[gphase]right:$\phi$}} ::\begin{cases}\ket{0}\mapsto\ket{0}\\\ket{1}\mapsto e^{i\phi}\ket{1}\end{cases}\quad\quad \phase{rn,label={[rphase]right:$\theta$}} ::\begin{cases}\ket{+}\mapsto\ket{+}\\\ket{-}\mapsto e^{i\theta}\ket{-}\end{cases}\quad\quad \Hadamard{} :: \begin{cases}\ket{0}\mapsto\ket{+}\\\ket{1}\mapsto\ket{-}\end{cases}
\]
where the phases $\phi$ and $\theta$ are multiples of $\pi/4$, and $\ket{\pm}=\frac{1}{\sqrt{2}}(\ket{0}\pm\ket{1})$.
A single wire corresponds to an identity operation.
Diagrams are read from bottom to top, i.e.\ the dangling wire at the bottom is considered the input, the one at the top the output.
Connecting the output of one node to the input of another corresponds to serial composition of operators.

For example, \phase{gn,label={[gphase]right:$\pi$}} is the Pauli-$Z$ operator and \phase{rn,label={[rphase]right:$\pi$}} is the Pauli-$X$ operator.
Thus
\begin{center}
 \begin{tikzpicture}
	\begin{pgfonlayer}{nodelayer}
		\node [style=gn,label={[gphase]right:$\pi$}] (0) at (0, -0.25) {};
		\node [style=rn,label={[rphase]right:$\pi$}] (1) at (0, 0.25) {};
		\node [style=none] (2) at (0, 0.5) {};
		\node [style=none] (3) at (0, -0.5) {};
	\end{pgfonlayer}
	\begin{pgfonlayer}{edgelayer}
		\draw (2.center) to (3.center);
	\end{pgfonlayer}
 \end{tikzpicture}
\end{center}
corresponds to applying first $Z$ and then $X$.
The node \phase{gn,label={[gphase]right:$\pi/4$}} represents the $T$-operator defined in \eqref{eq:TH}.

It is easy to see that the three types of nodes with the given phase values are all in the Clifford+T group.
Furthermore, as $\{H,T\}$ is a generating set for this group, any Clifford+T operator can be expressed as a diagram in terms of \Hadamard\ and \phase{gn,label={[gphase]right:$\pi/4$}}.

The \ZX-calulus is not just an alternative notation, it comes with rewriting rules that allow the derivation of equalities between diagrams.
The rules relevant for this paper are the following, where $n$ is an integer:
\begin{center}
 \input{tikz-files/ZX-rules-1.tikz}
\end{center}
\begin{center}
 \input{tikz-files/ZX-rules-2.tikz}
\end{center}
All of these rules also hold with the colours reversed.
Furthermore, the rules are all sound---i.e. they hold true when translated back into Dirac or matrix notation---if equality is taken to be up to global scalar factors.

E.g. by rule (P) with $\phi=\pi$, we have
\begin{center}
 \input{tikz-files/pi-commutation_example.tikz}
\end{center}
In matrix notation, the two diagrams correspond to
\[
 \begin{pmatrix}1&0\\0&-1\end{pmatrix}\begin{pmatrix}0&1\\1&0\end{pmatrix} = \begin{pmatrix}0&1\\-1&0\end{pmatrix}
 \quad\text{and}\quad
 \begin{pmatrix}0&1\\1&0\end{pmatrix}\begin{pmatrix}1&0\\0&-1\end{pmatrix} = \begin{pmatrix}0&-1\\1&0\end{pmatrix}.
\]
The two matrices are not equal, but as
\[
 -\begin{pmatrix}0&1\\-1&0\end{pmatrix} = \begin{pmatrix}0&-1\\1&0\end{pmatrix},
\]
the equality holds up to a scalar factor.

\subsection{Some further definitions and lemmas}

We shall denote the single-qubit Clifford group by $\mathcal{C}_1$; this group contains all diagrams built up from the components given in section \ref{s:ZX} in which the phases of red and green nodes are restricted to integer multiples of $\pi/2$.
It will also be useful to define two further sets of \ZX-calculus operators:
 \begin{center}
  \input{tikz-files/W.tikz}
  $\quad$ and $\quad$
  \input{tikz-files/V.tikz}
 \end{center}
In the remainder of this section we will prove various lemmas about how operators in $\mathcal{C}_1$, $\mathcal{W}$ and $\mathcal{V}$ compose.

\begin{lem}\label{lem:C_normal_form}
 The following two sets each contain a unique representation for each operator $C\in\mathcal{C}_1$:
 \begin{equation}\label{eq:local_Clifford_normal_form}
  \left\{
  \input{tikz-files/C1-1.tikz}
  \right\} \quad\quad\text{and}\quad\quad
  \left\{
  \input{tikz-files/C1-2.tikz}
  \right\},
 \end{equation}
 where in both cases $\alpha,\beta,\gamma\in\{0,\pi/2,\pi,-\pi/2\}$.
\end{lem}
\begin{proof}
 First note that any single-qubit Clifford operator can be written in terms of red and green nodes only, by substituting for the Hadamard nodes using the rule (Eu).
 Then, each such operator must have a representation with no more than three nodes: given any diagram with at least four nodes, either 
 \begin{mitem}
  \item there are two adjacent nodes of the same colour, in which case they can be merged by rule (S), or
  \item there is a node with a phase that is a multiple of $2\pi$, in which case it can be removed by rule (Id), or
  \item there is a node with a phase of $\pi$, in which case it can be moved past a node of the other colour using rule (P) and then merged with another node of the same colour, or
  \item there are three adjacent nodes with phases in the set $\{\pm\pi/2\}$. In this last case, note that
   \begin{center}
    \input{tikz-files/colour-change.tikz}
   \end{center}
   Similar results can be derived for any combination of plus and minus signs in the phases. Hence if there is a sequence of four nodes of alternating colours, all of which have phases in the set $\{\pm\pi/2\}$, we can change the colours of three of them, and thus get two adjacent nodes of the same colour, which can be merged.
 \end{mitem}
 In each of the cases listed above the number of nodes in the diagram can be reduced by applying suitable rewrite rules.
 The strategy works until there are no more than three nodes left.
 Having reduced all diagrams to at most three nodes, it is straightforward---albeit somewhat tedious---to check that the given sets indeed contain a unique representation of each Clifford operator.
\end{proof}

Note that lemma \ref{lem:C_normal_form} shows directly that the \ZX-calculus is complete for single qubit Clifford operators.

\begin{lem}\label{lem:U}
 The following set contains a unique representation of each operator of the form $TC$, where $C\in\mathcal{C}_1$:
 \[
  \mathcal{U} = \left\{
  \input{tikz-files/U.tikz}
  \right\}
 \]
 if $\alpha,\beta,\gamma\in\{0,\pi/2,\pi,-\pi/2\}$.
\end{lem}
\begin{proof}
 This follows immediately from lemma \ref{lem:C_normal_form}.
\end{proof}

\begin{lem}\label{lem:CV}
 Let $C\in\mathcal{C}_1$, $U\in\mathcal{U}$ and $V\in\mathcal{V}$. Then
 \begin{center}
  \input{tikz-files/VC.tikz}
  $\quad$ and $\quad$
  \input{tikz-files/UC.tikz}
 \end{center}
 for some $W\in\mathcal{W}$, $U'\in\mathcal{U}$, $V'\in\mathcal{V}$ and $a,b\in\{0,1\}$. For the particular case of the first equality where $C$ consists solely of $\pi$ phase shifts, $W$ is the identity and we have
 \begin{center}
  \input{tikz-files/V-pi.tikz}
 \end{center}
 with $\bar{V}\in\mathcal{V}\setminus\{V\}$.
\end{lem}
\begin{proof}
 Substitute for $C$ using the first set of normal forms given in lemma \ref{lem:C_normal_form} and for $V$ and $U$ using the definitions of $\mathcal{V}$ and $\mathcal{U}$; the results then follow from straightforward application of the rules of the \ZX-calculus.
\end{proof}

\begin{lem}\label{lem:pi-commutation}
 Suppose $V_1,\ldots,V_n\in\mathcal{V}$ for some positive integer $n$. Then if $a,b\in\{0,1\}$,
 \begin{center}
  \input{tikz-files/VV-pi.tikz}
 \end{center}
 for some $a',b'\in\{0,1\}$ and $V_1',\ldots,V_n'\in\mathcal{V}$.
\end{lem}
\begin{proof}
 By induction on $n$, using the second part of lemma \ref{lem:CV}.
\end{proof}
\section{Completeness}\label{s:completeness}

\begin{thm}
 Any single-qubit operator consisting of phase shifts that are multiples of $\pi/4$ and Hadamard operators is either a Clifford operator or it can be written in the normal form
 \begin{equation}\label{eq:UVVW}
  \input{tikz-files/UVVW.tikz}
 \end{equation}
 for some integer $n\geq 0$, where $W\in\mathcal{W}$, $V_1,\ldots,V_n\in\mathcal{V}$ and $U\in\mathcal{U}$.
\end{thm}

\begin{proof}
 It is easy to see that any single-qubit Clifford+T operator can be written solely in terms of \phase{rn,label={[rphase]right:$\pi/2$}} and \phase{gn,label={[gphase]right:$\pi/4$}}.
 To prove the theorem, it thus suffices to show that adding \phase{rn,label={[rphase]right:$\pi/2$}} or \phase{gn,label={[gphase]right:$\pi/4$}} to any Clifford operator or any diagram in normal form yields a diagram that can be rewritten to a Clifford operator or normal form diagram.

 Consider first \phase{rn,label={[rphase]right:$\pi/2$}}.
 This is a Clifford operator, so adding it to a Clifford diagram yields another Clifford diagram.
 Furthermore, 
 \begin{center}
  \input{tikz-files/W-pi_2.tikz}
 \end{center}
 for some $C\in\mathcal{C}_1$, so if $n>0$,
 \begin{center}
  \input{tikz-files/UVVW-pi_2.tikz}
  $=\;$
  \input{tikz-files/UVVC.tikz}
  $=\;$
  \input{tikz-files/U-pi-pi-VVW.tikz}
  $=\;$
  \input{tikz-files/UVVW-prime.tikz}
 \end{center}
 by lemmas \ref{lem:CV} and \ref{lem:pi-commutation}, where $a,b\in\{0,1\}$, $W'\in\mathcal{W}$, $U'\in\mathcal{U}$ and $V_1',\ldots,V_n'\in\mathcal{V}$.
 From lemma \ref{lem:CV}, we also have that, if $n=0$, the diagram resulting from the application of \phase{rn,label={[rphase]right:$\pi/2$}} to a normal form diagram can be rewritten into normal form.
 This covers all the cases.

 Now consider \phase{gn,label={[gphase]right:$\pi/4$}} instead.
 Note that
 \begin{center}
  \input{tikz-files/C-pi_4.tikz}
  $\quad$ and $\quad$
  \input{tikz-files/U-pi_4.tikz}
 \end{center}
 for some $U'\in\mathcal{U}$ and $C'\in\mathcal{C}_1$.
 Furthermore, unless $W$ is the idenitity,
 \begin{center}
  \input{tikz-files/W-pi_4.tikz}
 \end{center}
 for some $V\in\mathcal{V}$.
 Thus adding \phase{gn,label={[gphase]right:$\pi/4$}} to a Clifford operator or a normal form diagram with non-trivial $W$ results in diagrams that can be rewritten to normal form.
 If $W$ is the identity and $n=0$, then the result of adding \phase{gn,label={[gphase]right:$\pi/4$}} will be a Clifford diagram.

 It remains to check what happens when $W$ is the identity and $n>0$.
 For any $V_n\in\mathcal{V}$, we can find $W\in\mathcal{W}$ and $a\in\{0,1\}$ such that
 \begin{center}
  \input{tikz-files/V-pi_4.tikz}
 \end{center}
 Then by lemmas \ref{lem:CV} and \ref{lem:pi-commutation}, the entire diagram can be brought into normal form.

 Thus, whenever \phase{rn,label={[rphase]right:$\pi/2$}} or \phase{gn,label={[gphase]right:$\pi/4$}} is added to a Clifford circuit or normal form diagram, the resulting diagram can be rewritten into a Clifford circuit or normal form diagram, completing the proof.
\end{proof}

\begin{thm}\label{thm:not_identity}
 No normal form diagram as given in \eqref{eq:UVVW} is equal to the identity.
\end{thm}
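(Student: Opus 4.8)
The plan is to prove the stronger statement that every normal form diagram represents a \emph{non-Clifford} operator. Since the identity lies in $\mathcal{C}_1$ and equality here is only up to a global scalar, it suffices to show that no normal form diagram is a scalar multiple of any Clifford operator; in particular it is then not a scalar multiple of the identity. This reduction is the natural one, since the existence theorem assigns the normal form precisely to the operators that are \emph{not} Clifford, so the content to verify is exactly that the shape never collapses back into $\mathcal{C}_1$.

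To detect non-Cliffordness I would work on the Bloch sphere, where a single-qubit unitary is Clifford (up to phase) exactly when its induced rotation in $SO(3)$ is a symmetry of the octahedron---equivalently, a signed permutation of the coordinate axes, i.e.\ conjugation sends each Pauli to a signed Pauli. Using the defining property of $\mathcal{U}$ (lemma~\ref{lem:U}) I would write the normal form as $N = W\,V_n\cdots V_1\,U$ with $U = TC$ for some $C\in\mathcal{C}_1$, and note that each $V_i$ is an odd power of $T$---a $z$-rotation by an odd multiple of $\pi/4$---together with a Clifford coming from the red $\pi/2$ node. Hence the only non-octahedral ingredients in $N$ are $z$-rotations by odd multiples of $\pi/4$, separated by octahedral rotations (the Clifford pieces, which include the $\pi/2$ rotations of the red nodes and all of $W$).

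The key device is a \emph{level}. The Bloch coordinates of any state reachable here lie in $\ZZ[\tfrac1{\sqrt2}]$; for such a vector let $k$ be the least non-negative integer with every coordinate in $\sqrt2^{-k}\,\ZZ[\sqrt2]$. Cardinal axes have level $0$, octahedral rotations preserve the level (they only permute and sign-flip coordinates), and a $z$-rotation by an odd multiple of $\pi/4$ multiplies the $x$- and $y$-coordinates by $\tfrac1{\sqrt2}$ times an integer combination, so it tends to raise the level. I would first pick a Pauli $\hat u$ whose axis $C$ maps to the equator, so that after $U$ the tracked vector sits at an odd multiple of $\pi/4$ in the $xy$-plane and already has level $1$; I would then follow this vector through the remaining $V_i$ and $W$. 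Since $W$ and the red-node rotations preserve the level, it remains to control the odd-$\pi/4$ $z$-rotations, and if the level never drops back to $0$ the final vector is non-cardinal, $N$ is non-Clifford, and we are done.

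The main obstacle is exactly this no-collapse step: ruling out the divisibility-by-$\sqrt2$ cancellations that could in principle lower the level. This is the single-qubit incarnation of the Kliuchnikov--Maslov--Mosca level argument, and it is what the $T$-optimality of the Matsumoto--Amano normal form \cite{matsumoto_representation_2008} ultimately encodes. Concretely I would track the numerators modulo $\sqrt2$ and show that each $\mathcal{V}$-layer raises the level by one, so that the final level is $n+1\ge1$ and in particular stays positive. Alternatively, one can sidestep the computation entirely by observing that $N$, rewritten via (Eu) into $H$, $S$ and $T$ gates, is a Matsumoto--Amano normal form of positive $T$-count; since the $T$-count is an invariant and the identity has $T$-count $0$, the two cannot be equal.
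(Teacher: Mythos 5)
Your proposal is correct and takes essentially the same route as the paper: the paper likewise tracks the image of a single axis (the stabilizer $(0,0,1)$ of $\ket{0}$) through $WV_n\cdots V_1TC$ in matrix mechanics and, following Matsumoto--Amano, establishes your ``no-collapse'' step by exactly the parity bookkeeping you sketch---writing the vector as $\frac{1}{\sqrt{2^m}}(x_1+x_2\sqrt{2},\,y_1+y_2\sqrt{2},\,z_1+z_2\sqrt{2})$ and showing that $x_1$ remains odd under every application of an element of $\mathcal{V}$, which is precisely your claim that the level rises by one per $\mathcal{V}$-layer. Your strengthening to ``never Clifford'' and your adaptive choice of initial Pauli are only cosmetic variations (the paper instead checks the $n=0$ case directly and notes that $W\in\mathcal{W}\subset\mathcal{C}_1$ merely permutes and signs coordinates), so the two arguments coincide in substance.
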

\begin{proof}
 We will show that in matrix mechanics, no normal form circuit is equal to a scalar multiple of the identity matrix.
 As the \ZX-calculus is sound, this implies that no normal form circuit is equal to the identity within the \ZX-calculus.

 Following \cite{matsumoto_representation_2008}, we use an adaptation of the stabilizer formalism.
 Let $M_{(x,y,z)}:=xX+yY+zZ$, where $X,Y,Z$ are the Pauli matrices.
 We say that a single qubit state $\ket{\psi}$ is \emph{stabilized} by $(x,y,z)$ if $M_{(x,y,z)}\ket{\psi}=\ket{\psi}$.
 It is easy to see that if $(x,y,z)$ stabilizes $\ket{0}$, then $(x,y,z)=(0,0,1)$.

 Let $S$ be the phase gate, and denote the \phase{rn,label={[rphase]right:$\pi/2$}} operator by $R$, so $\mathcal{V}=\{TR, TSR\}$.
 Now suppose $(x,y,z)$ stabilizes some state $\ket{\psi}$.
 Then for any $C\in\mathcal{C}_1$, $C\ket{\psi}$ is stabilized by some expression of the form $(a\sigma(x),b\sigma(y),c\sigma(z))$, where $\sigma$ is some permutation on the set $\{x,y,z\}$ and $a,b,c\in\{\pm1\}$.
 This is because $C\ket{\psi} = (CM_{(x,y,z)}C^{-1})C\ket{\psi}$ and conjugation by a Clifford operator maps the set of Pauli matrices to itself, up to factors of $\pm 1$.
 Furthermore,
 \begin{mitem}
  \item $T\ket{\psi}$ is stabilized by $\frac{1}{\sqrt{2}}(x-y,x+y,z\sqrt{2})$,
  \item $TR\ket{\psi}$ is stabilized by $\frac{1}{\sqrt{2}}(x+z,x-z,y\sqrt{2})$, and
  \item $TSR\ket{\psi}$ is stabilized by $\frac{1}{\sqrt{2}}(z-x,x+z,y\sqrt{2})$.
 \end{mitem}

 We shall consider the effect of applying a normal form diagram to $\ket{0}$.
 First, consider the case where $W$ is the identity and $n=0$, i.e.\ the diagram is simply of the form $TC$ for some Clifford operator $C$.
 Now $TC\ket{0}$ is stabilized by one of the expressions
 \begin{equation}\label{eq:TC_stabilizers}
  \frac{1}{\sqrt{2}}(\pm1,\pm1,0),\quad \frac{1}{\sqrt{2}}(\mp1,\pm1,0), \quad\text{and}\quad (0,0,\pm1).
 \end{equation}
 Even though one of the potential stabilizers is $(0,0,1)$, it is easy to see that $TC$ is not a scalar multiple of the identity for any $C$.

 Next consider the possible stabilizers for $V_1TC\ket{0}$, where $V_1\in\mathcal{V}$.
 These are
 \begin{gather*}
  \frac{1}{2}(\pm1,\pm1,\pm\sqrt{2}),\quad \frac{1}{2}(\mp1,\pm1,\pm\sqrt{2}),\quad \frac{1}{2}(\mp1,\mp1,\pm\sqrt{2}),\quad \frac{1}{2}(\pm1,\mp1,\pm\sqrt{2}),\\
  \frac{1}{\sqrt{2}}(\pm1,\pm1,0),\quad\text{and}\quad \frac{1}{\sqrt{2}}(\mp1,\pm1,0).
 \end{gather*}

 Any stabilizer in the set above can be expressed as
 \begin{equation}\label{eq:x1x2}
  \frac{1}{\sqrt{2^m}}(x_1+x_2\sqrt{2},y_1+y_2\sqrt{2},z_1+z_2\sqrt{2}),
 \end{equation}
 where $m,x_1,x_2,y_1,y_2,z_1,z_2\in\ZZ$ with $m\geq0$.
 Applying a transformation from $\mathcal{V}$ maps that stabilizer to
 \[
  \frac{1}{\sqrt{2^{m+1}}}\left((x_1+z_1)+(x_2+z_2)\sqrt{2},(x_1-z_1)+(x_2-z_2)\sqrt{2},2y_2+y_1\sqrt{2}\right)
 \]
 or
 \[
  \frac{1}{\sqrt{2^{m+1}}}\left((z_1-x_1)+(z_2-x_2)\sqrt{2},(x_1+z_1)+(x_2+z_2)\sqrt{2},2y_2+y_1\sqrt{2}\right).
 \]
 Note that $\mathcal{W}\subset\mathcal{C}_1$, so the effect of $W\in\mathcal{W}$ is at most a permutation of the numbers $x,y,z$ and the introduction of minus signs.
 Thus the stabilizer of $U\ket{\psi}$ for any normal form operator $U$ can be written in the form \eqref{eq:x1x2}.

 Following \cite{matsumoto_representation_2008}, we consider the parity of $x_1,x_2,y_1,y_2,z_1$ and $z_2$ under the transformations given by repeated application of elements of $\mathcal{V}$.
 For the stabilizers given in \eqref{eq:TC_stabilizers}, we have either $x_1$ and $y_1$ odd and the others even, or $z_1$ odd and the others even.
 For a given $a,b$, the parity of $|a-b|$ is the same as that of $a+b$, so the two transformations in $\mathcal{V}$ have the same effects on the parity of $x_1,x_2,y_1,y_2,z_1$ and $z_2$.
 
 If $x_1$ and $y_1$ are odd and the others even, then after application of some $V\in\mathcal{V}$, $x_1,y_1$, and $z_2$ are odd.
 A second application of $V$ leads to a stabilizer where all factors are odd except for $z_1$.
 A third application of $V$ gives a stabilizer where once again $x_1,y_1$, and $z_2$ are odd.
 Thus the parity of these factors changes cyclically.

 If $z_1$ is odd in the beginning and the other factors are even, then after one application of $V$, $x_1,y_1$ and $z_2$ are odd, after which the same cyclical behaviour appears as above.

 Note that if $WV_n\ldots V_1TC$ is to be a scalar multiple of the identity, then $V_n\ldots V_1TC\ket{0}$ must have a stabilizer in the set $\{(0,0,c),(0,c,0)\}$ for some non-zero $c$, i.e.\ either $x_1=x_2=y_1=y_2=0$ or $x_1=x_2=z_1=z_2=0$.
 In particular, $WV_n\ldots V_1TC$ can only be the identity if $V_n\ldots V_1TC\ket{0}$ has a stabilizer in which either $x_1,x_2,y_1,$ and $y_2$ are all even, or $x_1,x_2,z_1,$ and $z_2$ are all even.
 Yet, as shown above, for any $V_n\ldots V_1TC\ket{0}$, the factor $x_1$ in the stabilizer is always odd.
 Thus $WV_n\ldots V_1TC$ is never the identity, completing the proof.
\end{proof}

\begin{lem}\label{lem:adjoint}
 Consider a normal form diagram $D=WV_n\ldots V_1U$.
 Then $D^\dagger$ is equal to some normal form diagram with the same number of copies of elements of $\mathcal{V}$, i.e.\ $D^\dagger=W'V_n'\ldots V_1'U'$ for some $W'\in\mathcal{W}, V_1',\ldots,V_n'\in\mathcal{V}$ and $U'\in\mathcal{U}$.
\end{lem}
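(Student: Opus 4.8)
The plan is to take the adjoint at the level of the diagram and then re-normalise. Recall that in the \ZX-calculus the adjoint $(\cdot)^\dagger$ is computed by reflecting a diagram top-to-bottom and negating every phase (Hadamard nodes being self-adjoint); this is sound because it matches conjugate-transposition of the underlying matrices. Hence $D^\dagger = U^\dagger V_1^\dagger\cdots V_n^\dagger W^\dagger$, and it suffices to understand the adjoint of each building block. Since $\mathcal{C}_1$ is a group, $W^\dagger\in\mathcal{C}_1$. For the $T$-containing blocks I would use the scalar identity \phase{gn,label={[gphase]right:$-\pi/4$}}$\,=\,$\phase{rn,label={[rphase]right:$\pi$}}\,\phase{gn,label={[gphase]right:$\pi/4$}}\,\phase{rn,label={[rphase]right:$\pi$}}, which is just rule (P) (a red $\pi$ node conjugates \phase{gn,label={[gphase]right:$\pi/4$}} to its inverse) together with rule (Id). Writing $R$ for the \phase{rn,label={[rphase]right:$\pi/2$}} operator, a short calculation using rules (P) and (S) then yields, up to a global scalar, the clean identity $V^\dagger = R\,V\,R$ for every $V\in\mathcal{V}$, and likewise $U^\dagger = C\,U'$ for some $C\in\mathcal{C}_1$ and $U'\in\mathcal{U}$.

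Substituting these identities shows that $D^\dagger$ is a diagram built from exactly $n$ copies of elements of $\mathcal{V}$ together with a single element of $\mathcal{U}$, interspersed only with Clifford (indeed Pauli) factors, namely the extra copies of $R$ and the red $\pi$ nodes produced by the adjoint rewrites. The remaining task is then purely to reorganise these Clifford factors. I would sweep them towards the two ends of the diagram using lemmas \ref{lem:CV} and \ref{lem:pi-commutation}: lemma \ref{lem:CV} turns a Clifford sitting on top of a $\mathcal{V}$- or $\mathcal{U}$-block into a $\mathcal{W}$-block (respectively a $\mathcal{U}$-block) plus Pauli phases pushed below it, and lemma \ref{lem:pi-commutation} commutes those Pauli phases through the entire stack of $\mathcal{V}$-blocks. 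Crucially, every rewrite supplied by these two lemmas sends one $\mathcal{V}$-block to exactly one $\mathcal{V}$-block, so none is created or destroyed; collecting the residual Clifford into a single $\mathcal{W}$ at the top and folding the rest into the final $\mathcal{U}$ at the bottom therefore leaves precisely $n$ copies of $\mathcal{V}$, which is the claim.

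The main obstacle is the reorientation forced by the adjoint. Because $(\cdot)^\dagger$ reverses the diagram, the $T$-block coming from $U^\dagger$ initially sits at the \emph{top} of $D^\dagger$, whereas a normal form requires its unique $\mathcal{U}$-block at the \emph{bottom}; so the normalisation is not a mere tidying of Cliffords but genuinely relocates a $T$-block from one end of the diagram to the other. The delicate point is to verify that this relocation can be effected entirely by the count-preserving moves of lemmas \ref{lem:CV} and \ref{lem:pi-commutation}, equivalently that the number of $\mathcal{V}$-blocks is neither lost nor gained. I expect to pin down the exact count by a two-sided argument: the construction underlying the normal-form theorem never increases the number of $\mathcal{V}$-blocks beyond the number of \phase{gn,label={[gphase]right:$\pi/4$}} nodes actually present, giving an upper bound of $n$; applying the same bound to $(D^\dagger)^\dagger = D$ then supplies the matching lower bound, so the count is exactly $n$. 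This final step relies on the invariance of the $T$-count underlying the Matsumoto--Amano normal form \cite{matsumoto_representation_2008}.
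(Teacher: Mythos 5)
Your opening moves match the paper's: $D^\dagger = U^\dagger V_1^\dagger\cdots V_n^\dagger W^\dagger$, the identity $V^\dagger = R\,V\,R$ up to scalar, and $U^\dagger = C\,T$ for some Clifford $C$ (your ``$C\,U'$'' is the same thing). But you then treat the top-to-bottom relocation of the $T$-block as something to be achieved by sweeping Cliffords with lemmas \ref{lem:CV} and \ref{lem:pi-commutation}, and you rightly flag that this is delicate --- in fact it is impossible with that toolbox: neither lemma commutes a $\mathcal{U}$-block past a $\mathcal{V}$-block, so no relocation move exists. The idea you are missing is that no relocation is needed. In $C\,T\,(RV_1R)(RV_2R)\cdots(RV_nR)\,W^\dagger$ the exposed $T$ sits directly above the first $R$, and $TR\in\mathcal{V}$, so the dagger of $U$ \emph{dissolves} into a new topmost $\mathcal{V}$-block $V_0$; each interior $RR$ merges by rule (S) into a red $\pi$, which lemma \ref{lem:pi-commutation} pushes to the bottom; and there the trailing Clifford $R\,W^\dagger$, together with the accumulated Pauli phases, is absorbed into the bottommost $\mathcal{V}$-block, since for any $V\in\mathcal{V}$ and Clifford $C'$ the composite $V C'$ has the form $TC''$ and is hence representable in $\mathcal{U}$ by lemma \ref{lem:U}. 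One $\mathcal{V}$-block is created at the top and one destroyed at the bottom, so the count is exactly $n$ by construction --- note this directly contradicts your interim claim that the rewrites create or destroy no $\mathcal{V}$-blocks.

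Your fallback counting argument does not close the hole. The lower bound obtained by applying your upper bound to $(D^\dagger)^\dagger = D$ presupposes that the number of $\mathcal{V}$-blocks in \emph{any} normal form of a given operator is well defined; that is precisely a consequence of the uniqueness theorem, which comes \emph{after} this lemma and uses it, so the argument is circular at this point in the development. Without uniqueness, all you could conclude is that $D$ admits some normal form with at most $k\leq n$ blocks, not that $k=n$. The monotonicity claim (``normalisation never increases the number of odd multiples of $\pi/4$'') is plausible but unproven and would need checking in every case of the induction behind the existence theorem; and invoking Matsumoto--Amano $T$-count invariance at the matrix level would additionally require both its invariance under $\dagger$ and a correspondence between the paper's \ZX\ normal forms and Matsumoto--Amano normal forms, neither of which you supply. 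The paper's explicit derivation \eqref{eq:adjoint} avoids all of this; moreover, the uniqueness proof later uses not merely the statement of this lemma but the intermediate diagram arising in that derivation, so the constructive route is what is actually needed downstream.
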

\begin{proof}
 By the properties of the dagger functor, $D^\dagger = U^\dagger V_1^\dagger\ldots V_n^\dagger W^\dagger$.
 Now for any $U\in\mathcal{U}$, we can find $C\in\mathcal{C}_1$ such that
 \begin{center}
  \input{tikz-files/U_dagger.tikz}
 \end{center}
 and for any $V\in\mathcal{V}$, we have
 \begin{center}
  \input{tikz-files/V_dagger.tikz}
 \end{center}
 Thus by lemmas \ref{lem:CV} and \ref{lem:pi-commutation},
 \begin{equation}\label{eq:adjoint}
  \input{tikz-files/WVVU_dagger.tikz}
  \;=\;
  \input{tikz-files/WVVU_dagger-1.tikz}
  =\;
  \input{tikz-files/WVVU_dagger-2.tikz}
  =\;
  \input{tikz-files/WVVU_dagger-3.tikz}
  =\;
  \input{tikz-files/WVVU_dagger-4.tikz}
  \;=\;
  \input{tikz-files/WVVU_dagger-5.tikz}
 \end{equation}
 for some $W'\in\mathcal{W}$, $V_0',\ldots,V_n',V_1'',\ldots,V_n''\in\mathcal{V}$, and $U'\in\mathcal{U}$.
 Note that $V_1'',\ldots,V_n''$ is just a relabelling of $V_{n-1}',\ldots,V_0'$.
\end{proof}

\begin{thm}
 The normal form for Clifford+$T$ diagrams given in \eqref{eq:UVVW} is unique.
\end{thm}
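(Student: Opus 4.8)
The plan is to prove uniqueness by showing that the syllable data $(W, V_n, \ldots, V_1, U)$ can be recovered from the operator itself, so that two normal form diagrams denoting the same operator (up to the global scalar that the \ZX-calculus ignores) must have identical syllables. Concretely, I would aim to show: if $D = WV_n\ldots V_1 U$ and $D' = W'V'_{m}\ldots V'_1 U'$ denote the same operator up to a scalar, then $n = m$, $W = W'$, $U = U'$ and $V_i = V'_i$ for every $i$. Since $D \propto D'$ is equivalent to $D'^\dagger D \propto I$, and since by Lemma \ref{lem:adjoint} the adjoint of a normal form is again a normal form with the same number of copies of $\mathcal{V}$, I can move freely between a diagram and its inverse while staying within normal forms.

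The engine is the stabilizer bookkeeping already set up in the proof of Theorem \ref{thm:not_identity}. Applying a normal form to $\ket{0}$ produces a state whose stabilizer is of the form \eqref{eq:x1x2}, namely $\frac{1}{\sqrt{2^m}}(x_1 + x_2\sqrt{2}, y_1 + y_2\sqrt{2}, z_1 + z_2\sqrt{2})$ with integer coefficients, and each additional element of $\mathcal{V}$ acts on the residues of $x_1,x_2,y_1,y_2,z_1$ and $z_2$ modulo $2$ by the explicit linear map recorded there. The first step is to show that $n$ is determined by the operator. This is exactly where Theorem \ref{thm:not_identity} does the work: after at least one application of an element of $\mathcal{V}$ the coefficient $x_1$ is odd, so the denominator is in lowest terms and cannot collapse. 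Tracking the full residue pattern through the cyclic behaviour established there (rather than merely the magnitude $m$ of the denominator) lets me count the number of $\mathcal{V}$-applications; and since $W \in \mathcal{C}_1$ only permutes the coordinates and inserts signs, the count $n$ can be read off from the operator and is therefore an invariant.

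With $n$ fixed I would argue by induction, peeling off the outermost syllable. In the base case $n = 0$ the diagram is $WU$ with $W \in \mathcal{W}$ and $U \in \mathcal{U}$, and uniqueness reduces to the finite check that these operators are pairwise distinct up to a scalar, which follows from the uniqueness of the Clifford normal forms of Lemma \ref{lem:C_normal_form} together with Lemma \ref{lem:U}. For the inductive step, the residue pattern of the stabilizer of $D\ket{0}$ pins down the outermost $W$ and the topmost $V_n$; I would then strip these off by left-multiplying with their inverse, rewriting the result back into a normal form with $n-1$ copies of $\mathcal{V}$ using Lemmas \ref{lem:CV} and \ref{lem:pi-commutation} (with Lemma \ref{lem:adjoint} keeping the inverse in normal form). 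This lowers the denominator exponent and reduces to the case $n-1$, so the inductive hypothesis forces agreement of the remaining syllables.

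The main obstacle is the combinatorial heart underlying the first two paragraphs: making precise exactly how the residues $x_1,x_2,y_1,y_2,z_1,z_2$ modulo $2$, together with the denominator exponent $m$, encode both the total count $n$ and the outermost syllable, and in particular handling the degenerate cases where $U\ket{0}$ or an intermediate state is a computational basis state, so that $m$ can equal either $n$ or $n+1$ and the naive denominator alone fails to determine $n$. This is precisely the delicate decoding argument adapted from \cite{matsumoto_representation_2008}: Theorem \ref{thm:not_identity} supplies the crucial non-collapse fact that $x_1$ stays odd, but upgrading that single non-identity statement into a full injective readout of every syllable is where the real care is required.
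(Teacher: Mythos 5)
Your overall strategy differs from the paper's: you aim to prove injectivity of the normal form map directly, by decoding the syllable data $(W,V_n,\ldots,V_1,U)$ from arithmetic invariants of the operator, whereas the paper argues by contradiction, composing $D_1^\dagger\circ D_2$ for a shortest pair of equal-but-non-identical normal forms, rewriting this composite diagrammatically (via lemmas \ref{lem:CV}, \ref{lem:pi-commutation} and \ref{lem:adjoint}, with a case analysis on where the topmost syllables first differ) into a \emph{non-trivial} normal form, and then invoking theorem \ref{thm:not_identity} to conclude that it cannot equal the identity. That route only ever needs the single non-identity fact at the matrix level; it never needs to read syllables off an operator. Your route, if it worked, would be a self-contained matrix-level argument closer to Matsumoto and Amano's original, but as sketched it has a genuine gap at its central step.

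The gap is this: your inductive step claims that ``the residue pattern of the stabilizer of $D\ket{0}$ pins down the outermost $W$ and the topmost $V_n$,'' but the parity data you propose to use is provably blind to the choice of $V_n$. The paper itself records the obstruction: for any $a,b$ the parity of $|a-b|$ equals that of $a+b$, so the two elements of $\mathcal{V}$ induce \emph{identical} maps on the residues of $x_1,x_2,y_1,y_2,z_1,z_2$ modulo $2$ in \eqref{eq:x1x2}. Hence the parities (even together with the denominator exponent $m$, which you correctly note determines $n$ once the degenerate $m\in\{n,n+1\}$ ambiguity is resolved) can never distinguish $V_n=TR$ from $V_n=TSR$; distinguishing them requires strictly finer information, e.g.\ the exact entries over $\ZZ[\sqrt{2}]$ of the full Bloch matrix (tracking $DXD^\dagger$ as well as $DZD^\dagger$ --- a single stabilizer vector also cannot in general determine the Clifford layer $W$, since distinct Cliffords can agree on one axis). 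You flag the degenerate-denominator cases as the main obstacle, but the deeper failure is this parity-blindness, and nothing in your sketch supplies the finer invariant. It is exactly this case ($W=W'=1$, $n,m\neq0$, $V_n\neq V_m'$) that the paper handles by a trick unavailable to a pure parity argument: it applies a $-\pi/4$ green phase to both diagrams, converting the discrepancy between the two elements of $\mathcal{V}$ into a discrepancy between Clifford layers ($W\neq W'$), which the diagrammatic case analysis then detects. To repair your proof you would either need to import the full $\ZZ[\sqrt{2}]$-matrix decoding of \cite{matsumoto_representation_2008}, or fall back on the paper's composition-and-contradiction argument.
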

\begin{proof}
 Suppose there are two normal form diagrams which are equal but not identical.
 Pick a shortest pair of such diagrams, i.e.\ suppose the topmost nodes in the two digrams have different colours or different phases (or both).
 If the topmost nodes are the same, remove them both and keep going like this until a stage is reached where the remaining topmost nodes are different.
 As the two diagrams are not identical, this must be possible.

 Call these two diagrams $D_1$ and $D_2$.
 As $D_1=D_2$ by assumption, and because any normal form diagram is unitary, it must be the case that $D_1^\dagger\circ D_2$ is equal to the identity.
 We will show that under the given assumptions, $D_1^\dagger\circ D_2$ must be equal to some non-trivial normal form diagram.
 By theorem \ref{thm:not_identity}, this normal form diagram cannot be equal to the identity, thus leading to a contradiction.
 From that we conclude that two normal form diagrams are equal if and only if they are identical.

 Suppose $D_1$ can be written in normal form as $WV_n\ldots V_1U$ and $D_2$ as $W'V_m'\ldots V_1'U'$. 
 The requirement that the topmost nodes of $D_1$ and $D_2$ be different can be satisfied in different ways.
 Where the conditions are not symmetric under interchange of $D_1$ and $D_2$, by lemma \ref{lem:adjoint} it nevertheless suffices to consider just one of the two options.
 We will hence distinguish the following cases:
 \begin{mitem}
  \item $W=W'=1$, $n=m=0$, and the topmost nodes of $U$ and $U'$ differ
  \item $W=W'=1$, $n=0\neq m$, and the topmost nodes of $U$ and $V_m'$ differ
  \item $W=W'=1$, $n,m\neq 0$, and $V_n\neq V_m'$
  \item $W\neq W'$, $n=m=0$
  \item $W\neq W'$, $n=0\neq m$
  \item $W\neq W'$, $n,m\neq 0$
 \end{mitem}

 Firstly, if $W=W'=1$ and $n=m=0$, then $D_1=U$ and $D_2=U'$ with $U,U'\in\mathcal{U}$.
 Now any element of $\mathcal{U}$ can be expressed as $TC$, for some $C\in\mathcal{C}$.
 Thus $D_1=TC$ and $D_2=TC'$, and as $U\neq U'$ we must have $C\neq C'$.
 Therefore,
 \begin{center}
  \input{tikz-files/U-U_dagger.tikz}
 \end{center}

 Secondly, if $W=W'=1$ and $n=0\neq m$, consider $U$ and $V_m'$.
 Note that $U=TC$ for some Clifford operator $C$, and $V_m'=TC'$ for some Clifford operator $C'$.
 Again, the requirement that the topmost nodes of $U$ and $V_m'$ be different means that $C\neq C'$.
 As in the first case, we thus find $U^\dagger V_m' = C''$ for some $C''$.
 Then by lemmas \ref{lem:CV} and \ref{lem:pi-commutation}, $D_1^\dagger\circ D_2$ has a normal form $W''V_{m-1}''\ldots V_1''U''$.
 As $m>0$, this is non-trivial.

 The third case, $W=W'=1$, $n,m\neq 0$, and $V_n\neq V_m'$, can be reduced to a case where $W\neq W'$ by applying \phase{gn,label={[gphase]right:$-\pi/4$}} to both diagrams and using rule (S).

 For $W\neq W'$, we have (after some rewriting),
 \begin{equation}\label{eq:WdaggerW}
  \input{tikz-files/W-W_dagger.tikz}
  \in\left\{
  \input{tikz-files/W-W_dagger-set.tikz}
  \right\}.
 \end{equation}
 Then if $n=m=0$,
 \begin{center}
  \input{tikz-files/UW-WU_dagger.tikz}
 \end{center}
 since
 \[
  \input{tikz-files/W-W_dagger-pi.tikz}
  \in\left\{
  \input{tikz-files/W-W_dagger-pi-set.tikz}
  \right\} = \left\{
  \input{tikz-files/W-W_dagger-pi-set-V.tikz}
  \right\}
 \]
 for some $\alpha,\beta,\gamma\in\{0,\pi/2,\pi,-\pi/2\}$, $a,c\in\{0,1\}$ and $V\in\mathcal{V}$.

 The argument for the case $W\neq W'$ and $n=0\neq m$ is very similar, noting that for any $c\in\{0,1\}$, $\gamma\in\{0,\pi/2,\pi,-\pi/2\}$, and $V\in\mathcal{V}$
 \begin{center}
  \input{tikz-files/V-pi-gamma.tikz}
 \end{center}
 for some $V'\in\mathcal{V}$ and $a,b\in\{0,1\}$.
 Hence by lemmas \ref{lem:CV} and \ref{lem:pi-commutation}, the diagram can be rewritten into normal form.

 Lastly, consider the case where $W\neq W'$ and $n,m\neq 0$.
 By lemma \ref{lem:adjoint}, we can rewrite $D_1^\dagger$ to
 \begin{center}
  \input{tikz-files/W_dagger-pi-VVW.tikz}
 \end{center}
 Now
 \begin{center}
  \input{tikz-files/pi_2-pi-V.tikz}
 \end{center}
 for some $\beta\in\{0,\pi/2,\pi,-\pi/2\}$ and $b\in\{0,1\}$.
 Thus the argument concludes in the same way as in the previous case.

 We have shown that for any pair of normal form diagrams $D_1$ and $D_2$, $D_1^\dagger\circ D_2$ has a non-trivial normal form unless the two diagrams are identical.
 Therefore, by theorem \ref{thm:not_identity} and by unitarity of Clifford+T operators, two normal form diagrams are equal if and only if they are identical, i.e.\ the normal form is unique.
\end{proof}

\section{Conclusions}\label{s:conclusions}

We have shown that the \ZX-calculus is complete for the approximately universal single-qubit Clifford+T group.
The proof yields a unique normal form for single-qubit Clifford+T diagrams.
An obvious next step is to attempt to extend the proof to multiple qubits, possibly by combining the results from this paper with \cite{backens_zx-calculus_2013}.
It would also be useful to implement the normalisation algorithm in the automated graph rewriting system \texttt{Quantomatic} \cite{quantomatic}.

While the Clifford+T group being approximately universal for pure single-qubit quantum mechanics makes this completeness result more interesting, there does not currently exist any notion of approximation of operators within the \ZX-calculus itself. It would be interesting to define an approximate equality relation for diagrams, which could then be used to transform arbitrary line graphs into Clifford+T diagrams, normalise, and compare them.

\section*{Acknowledgements}

The author gratefully acknowledges financial support from the EPSRC.

\bibliographystyle{eptcs}
\bibliography{Clifford_plus_T}

\end{document}